\def\BibTeX{{\rm B\kern-.05em{\sc i\kern-.025em b}\kern-.08em
    T\kern-.1667em\lower.7ex\hbox{E}\kern-.125emX}}
\newcommand*{\rom}[1]{\expandafter\@slowromancap\romannumeral #1@}
\newtheorem{theorem}{Theorem}
\def \v x{\bm x}
\def \v x{\bm X}
\renewcommand{\v}[1]{\ensuremath{\boldsymbol{#1}}}
\begin{document}

\title{Optimal Privacy-Preserving Distributed Median Consensus
}

\author{\IEEEauthorblockN{Wenrui Yu}
\IEEEauthorblockA{
\textit{Aalborg University}\\
Denmark \\
wenyu@es.aau.dk}
\and
\IEEEauthorblockN{Qiongxiu Li}
\IEEEauthorblockA{\textit{Aalborg University}\\
Denmark \\
qili@es.aau.dk}
\and
\IEEEauthorblockN{Richard Heusdens}
\IEEEauthorblockA{\textit{Netherlands Defence Academy}\\
\textit{Delft University of Technology}\\
the Netherlands \\
r.heusdens@tudelft.nl}
\and
\IEEEauthorblockN{Sokol Kosta}
\IEEEauthorblockA{\textit{Aalborg University}\\
Denmark \\
sok@es.aau.dk}
}

\maketitle

\begin{abstract}
Distributed median consensus has emerged as a critical paradigm in multi-agent systems due to the inherent robustness of the median against outliers and anomalies in measurement. Despite the sensitivity of the data involved, the development of privacy-preserving mechanisms for median consensus remains underexplored. 
In this work, we present the first rigorous analysis of privacy in distributed median consensus, focusing on an $L_1$-norm minimization framework. We establish necessary and sufficient conditions under which exact consensus and perfect privacy—defined as zero information leakage—can be achieved simultaneously. Our information-theoretic analysis provides provable guarantees against passive and eavesdropping adversaries, ensuring that private data remain concealed. Extensive numerical experiments validate our theoretical results, demonstrating the practical feasibility of achieving both accuracy and privacy in distributed median consensus.
\end{abstract}

\begin{IEEEkeywords}
median consensus, distributed optimization, privacy, information-theoretical analysis, adversary 
\end{IEEEkeywords}

\section{Introduction}
\label{sec:intro}
Consensus algorithms facilitate agreement in distributed systems through localized computations and information exchange among neighboring nodes. Common examples include averaging~\cite{francca2020distributed,zhang2014asynchronous}, maximum/minimum~\cite{deplano2023unified}, and median consensus~\cite{deplano2023unified,franceschelli2016finite,dashti2019dynamic}. These algorithms form the backbone of distributed optimization and have been extensively applied in sensor networks, multi-agent systems, and high-performance computing~\cite{singhal2017lsc}.  Among them, median consensus stands out due to its inherent robustness to outliers---a critical advantage in large-scale sensor and multi-agent networks where noisy and unreliable data are common. By mitigating the influence of extreme values, median consensus is particularly well-suited for environments such as distributed computing clusters and parallel processing frameworks. However, despite its robustness, prior research has largely overlooked the issue of \emph{privacy preservation} in median consensus algorithms.

Several privacy-preserving frameworks have been proposed for consensus problems more broadly.  A widely applicable framework for ensuring privacy in consensus problems involves the use of differential privacy (DP) techniques \cite{huang2015differentially,nozari2018differentially,zhang2016dynamic,zhang2018improving,xiong2020privacy}. However, these methods often involve a trade-off between privacy preservation and the accuracy of the consensus output. Another approach to protecting privacy is the secure multiparty computation (SMPC) \cite{gupta2017privacy,li2019privacyA,tjell2020privacy,tjell2019privacy,xu2015secure,li2019privacyS,shoukry2016privacy,zhang2019admm}, which enables distributed computation while keeping individual inputs private. Despite its potential, SMPC often introduces significant communication overhead due to the need to split and distribute messages for secure transmission. Moreover, existing SMPC techniques primarily focus on consensus problems involving linear operations, such as averaging consensus, and their applicability to non-linear operations remains limited.  Another notable framework for addressing privacy concerns in distributed systems is subspace perturbation (SP) \cite{Jane2020ICASSP,Jane2020LS,li2020privacy}, along with its variants \cite{jordan2024,li2022communication,li2023adaptive}. SP is grounded in distributed optimization techniques, such as the Alternating Direction Method of Multipliers (ADMM) \cite{boyd2011distributed} and the Primal-Dual Method of Multipliers (PDMM) \cite{zhang2017distributed,heusdens2024distributed}. The core idea involves injecting noise into the non-convergent subspace of the optimization process. This approach ensures privacy preservation by obfuscating sensitive information while simultaneously preserving the accuracy and integrity of sensitive information in the convergent subspace.  Recent work \cite{yu2024privacy} has extended SP to maximum consensus, but its applicability to median consensus remains unexplored.

Although median consensus algorithms have demonstrated exceptional robustness against outliers, their privacy implications remain critically underexplored. This leaves a significant gap in understanding how to protect sensitive information while maintaining robustness and efficiency. In this work, we take the first step toward bridging this gap by investigating whether perfect privacy (zero information leakage) can be achieved while maintaining consensus correctness.  In particular, we establish a sufficient and necessary condition for achieving perfect privacy preservation in an L1-norm formulation of the median consensus problem. Rooted in an information-theoretic framework, our methodology derives conditions that guarantee zero leakage of private inputs for honest nodes. 
Experimental results confirm the practical feasibility of our derived conditions and illustrate how network density and convergence parameters impact privacy. To the best of our knowledge, this work presents the first rigorously analyzed privacy-preserving median consensus algorithm, supported by a comprehensive information-theoretic foundation. Our study bridges a critical gap between robustness and privacy in distributed consensus systems.

\section{Preliminaries}

\subsection{Problem formulation}\label{ss.pf}
We model the network as an undirected graph $G = (\mathcal{V},\mathcal{E})$, where $\mathcal{V} = \{1,\ldots,n\}$ represents the set of nodes/participants in the network and $\mathcal{E} \subseteq \mathcal{V} \times \mathcal{V}$ represents the set of edges, indicating the communication links between nodes. For each node $i$, its neighbor is defined as $\mathcal{N}_i = \{j \in{\cal V} \,|\, (i,j) \in \mathcal{E}\}$ and its degree is given by $d_i = |\mathcal{N}_i|$. Each node $i\in{\cal V}$ holds a local data value\footnote{For simplicity, $s_i$ is assumed to be a scalar, though the results can be easily generalized to vectors.} $s_i\in \mathbb{R}$. When we consider $s_i$ as a realization of a random variable, the corresponding random variable will be denoted by $S_i$ (corresponding capital). 

The privacy-preserving median consensus problem involves determining the median value $s_{\rm med}$ without revealing the private data $s_i$. This problem can be formulated \cite{deplano2023unified} as an L1-norm minimization problem: 
\begin{equation}
\begin{array}{ll} \text{minimize} & {\displaystyle \sum_{i\in {\cal V}} |x_i-s_i|,} \\\rule[4mm]{0mm}{0mm}
\text{subject to} & x_i -x_j = 0, \quad (i,j)\in \cal  E.
\end{array}
\label{eq:med_eq}
\end{equation}

\subsection{A/PDMM framework}

Building on the work of \cite{zhang2017distributed}, we can address the problem of minimizing a separable convex function under a set of equality constraints. The optimization problem is formulated as follows 
\begin{equation}
\begin{array}{ll} \text{minimize} & \displaystyle\sum_{i \in \mathcal{V}} f_i\left(x_i\right), \\\rule[3mm]{0mm}{0mm}
\text{subject to} & A_{ij} x_i+A_{ji} x_j =b_{ij}, \quad(i, j) \in \mathcal{E},
\end{array}
\label{eq:problem}
\end{equation}
where $f_i$ are convex, closed and proper (CCP) functions, $A_{ij}= -A_{ji} = 1$ for $i<j$, and $b_{ij}=0$. To solve \eqref{eq:problem}, the update equations for node $i\in \mathcal{V}$ are given by
\begin{align}
x_i^{(t)} &=\arg \min _{x_i}\left(\!\!f_i(x) +\sum_{j \in \mathcal{N}_i}\big( z_{i \mid j}^{(t)} A_{i j} x_i+\frac{c}{2}\|x_i\|^2\big)\!\!\right)\!\!,\\
    \nonumber y_{i \mid j}^{(t)}&=z_{i \mid j}^{(t)}+2 cA_{i j} x_i^{(t)}, \\
    \nonumber z_{i\mid j}^{(t+1)}&=
        (1-\theta)z_{i\mid j}^{(t)}+\theta y_{j\mid i}^{(t)},\label{eq.z_update} \rule[5mm]{0mm}{0mm}
\end{align}
where $y$ and $z$ are auxiliary variables, $\theta\in (0,1]$ is an avaraging constant and $c>0$ is a convergence 
parameter that influences the convergence rate. When $\theta=1$, the algorithm corresponds to  standard PDMM, while $\theta=\frac{1}{2}$ represents the $\frac{1}{2}$-averaged version of PDMM, which is equivalent to ADMM. In this work, since the L1 norm lacks uniform convexity,  standard PDMM will fail to converge. For this reason, we will focus our analysis on the case where $\theta=\frac{1}{2}$.

\subsection{Adversary model and evaluation metrics}\label{ss.eval}

\noindent\textbf{Adversary model}: 
We examine two commonly studied adversary models in the context of privacy-preserving algorithms. The first is the passive adversary model, which involves corrupt nodes within the network. These nodes adhere to the protocol, but they collide to collect and exchange information in an attempt to infer private data. We denote the set of corrupt nodes as $\mathcal{V}_c$ and the set of honest nodes as $\mathcal{V}_h$. Hence, $\mathcal{V}_h \cup \mathcal{V}_c = {\mathcal V}$ and $\mathcal{V}_h \cap \mathcal{V}_c = \emptyset$. The second model is the eavesdropping adversary, which intercepts all messages transmitted over unencrypted communication channels. These adversaries are assumed to collaborate, combining their knowledge to infer the private data of honest nodes. 



\noindent\textbf{Individual privacy}:
 In this work, we employ mutual information as a privacy metric due to its demonstrated effectiveness in prior research \cite{duchi2013local,kairouz2014extremal,li2022privacy}. Let $S_i$ be the random variable that represents the private data at node $i$, and $\mathcal{O}$ represent the total information observable by the adversary. The mutual information $I(S_i;\mathcal{O})$ \cite{cover2012elements} quantifies the amount of information about $S_i$ that can be inferred from $\mathcal{O}$. It is defined as
\begin{equation}
    \nonumber I(S_i;\mathcal{O})=h(S_i)-h(S_i\mid \mathcal{O}),
\end{equation}
where $h(\cdot)$ denotes differential entropy, assuming it exists.
When $I(S_i;\mathcal{O})=I(S_i;S_i)$, the adversary can fully reconstruct $s_i$. When $I(S_i;\mathcal{O})=0$, the adversary does not gain any information about $S_i$ by observing ${\cal O}$.

\section{Algorithm and Privacy Analysis}\label{sec.pa}
In this section, we first present the detailed A/PDMM algorithm designed to address the L1-norm formulation introduced in Section \ref{ss.pf}. Then, we analyze the privacy leakage in the corresponding distributed median consensus problem.

To address problem \eqref{eq:med_eq}, we can employ the regular equality constraint A/PDMM framework \cite{zhang2017distributed}. Specifically, we define $f_i(x_i)=|x_i-s_i|$.
This formulation aligns with the framework required for applying ADMM/PDMM. The detailed steps of this approach are outlined in Algorithm \ref{alg:med_primal}.

\begin{algorithm}[ht]
  \caption{Median consensus}
  \label{alg:med_primal}
  \begin{algorithmic}
     \ForAll{ $i \in \mathcal{V}, j \in \mathcal{N}_i$,}
     \State Randomly initialize $z_{i\mid j}^{(0)}$ \Comment{Initialization}
     \State $\text{Node}_j \leftarrow \text{Node}_i(z_{  i|j}^{(0)})$
     \EndFor
          \For{$t=0,1,...$} 
            \ForAll{$i \in \mathcal{V}$ } 
                    \State\vspace{-10pt}
                    \begin{align}\label{eq.x_update_eq}
     x_{  i}^{(t)}=\left\{ \begin{array}{ll}
         \frac{ -1- \sum_{  j \in {\cal N}_i} A_{  ij} z_{  i|j}^{(t)}}{cd_i},  &\text{if   } \frac{ -1- \sum_{  j \in {\cal N}_i} A_{  ij} z_{  i|j}^{(t)}}{cd_i}> s_i,\\
         \frac{ 1- \sum_{  j \in {\cal N}_i} A_{  ij} z_{  i|j}^{(t)}}{cd_i}, &\text{if   } \frac{ 1- \sum_{  j \in {\cal N}_i} A_{  ij} z_{  i|j}^{(t)}}{cd_i}< s_i,\\
         s_i, &\text{otherwise.} \rule[4mm]{0mm}{0mm}\\
    \end{array}\right.
\end{align}
                    \begin{equation}
                    \forall j\in \mathcal{N}_i: z_{  j|i}^{(t+1)}
        =\frac{1}{2}z_{j\mid i}^{(t)}+\frac{1}{2}(z_{  i|j}^{(t)}+2c A_{  ij}x_i^{(t)})\label{eq.z_update_eq}
                \end{equation}
                \State $\text{Node}_{j\in \mathcal{N}_i} \leftarrow \text{Node}_i(x_i^{(t)})$\Comment{Broadcast}
                \ForAll{ $j \in \mathcal{N}_i$} 
                 \State 
                 \begin{equation}\nonumber
                    z_{  j|i}^{(t+1)}
        =\frac{1}{2}z_{j\mid i}^{(t)}+\frac{1}{2}(z_{  i|j}^{(t)}+2c A_{  ij}x_i^{(t)})
                \end{equation}
            \EndFor
      \EndFor
      \EndFor
  \end{algorithmic}
\end{algorithm}


As defined in Section \ref{ss.eval}, we evaluate privacy preservation for node $i\in\mathcal{V}$ using the mutual information between $S_i$ and the total information observed by the adversary. The information observed throughout the iterations is given by $\{S_j,X_j^{(t)},Z_{j\mid k}^{(t)},Z_{j\mid k}^{(t)}\}_{j\in\mathcal{V}_c,k\in\mathcal{N}_j,t\in\mathcal{T}}$, where ${\mathcal T} = \{0,\ldots,t_{\rm max}\}$ and $t_{\rm max}$ is the maximum number of iterations. For the eavesdropping adversary, the information accessible in the communication channels includes $\{Z_{j\mid k}^{(0)},X_j^{(t)}\}_{j\in\mathcal{V},(j,k)\in\mathcal{E},t\in\mathcal{T}}$. By eliminating redundant information that both types of adversaries can access, individual privacy can be expressed as
\[
    I(S_i;\mathcal{O}) = I(S_i;\{S_j\}_{j\in\mathcal{V}_c},\{X_{j}^{(t)}\}_{j\in \mathcal{V},t\in\mathcal{T}},\{Z_{j\mid k}^{(t)}\}_{(j,k)\in\mathcal{E},t\in\mathcal{T}})
\]
We have the following result.
\begin{theorem}\label{thm.med_l1}
    Let $i\in{\cal V}_h$. We have
    \[
    I(S_i;\mathcal{O})=0,
    \]
    if and only if for all $t\in\mathcal{T}$,
    \begin{equation}
    s_i\notin[\frac{ -1- \sum_{  j \in {\cal N}_i} A_{  ij} z_{  i|j}^{(t)}}{cd_i},\frac{ 1- \sum_{  j \in {\cal N}_i} A_{  ij} z_{  i|j}^{(t)}}{cd_i}].
    \label{eq:cond}
  \end{equation}
  \label{theo:I}
    \end{theorem}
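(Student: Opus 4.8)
Throughout fix the honest node $i$ and abbreviate the two endpoints of the interval in \eqref{eq:cond} by $\ell_i^{(t)}$ and $u_i^{(t)}$; note $u_i^{(t)}-\ell_i^{(t)}=\frac{2}{cd_i}>0$ and that \eqref{eq.x_update_eq} is precisely the clamping $x_i^{(t)}=\mathrm{proj}_{[\ell_i^{(t)},u_i^{(t)}]}(s_i)$, so its three branches read ``$s_i$ lies below'', ``$s_i$ lies inside'', ``$s_i$ lies above'' $[\ell_i^{(t)},u_i^{(t)}]$. I will prove the two implications separately.

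\textbf{Necessity.} I argue the contrapositive: if \eqref{eq:cond} fails, there is $\tau\in\mathcal T$ with $s_i\in[\ell_i^{(\tau)},u_i^{(\tau)}]$, so the third branch of \eqref{eq.x_update_eq} applies and $x_i^{(\tau)}=s_i$. Node $i$ broadcasts $x_i^{(\tau)}$, hence $X_i^{(\tau)}$ is one of the observations composing $\mathcal O$, and so $I(S_i;\mathcal O)\ge I(S_i;X_i^{(\tau)})=I(S_i;S_i)=h(S_i)>0$ for non-degenerate $S_i$. Therefore $I(S_i;\mathcal O)=0$ forces \eqref{eq:cond} to hold for every $t\in\mathcal T$.

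\textbf{Sufficiency.} Assume \eqref{eq:cond} holds for all $t\in\mathcal T$. Then node $i$ is clamped at every step, $x_i^{(t)}\in\{\ell_i^{(t)},u_i^{(t)}\}$, so $s_i$ enters the recursion only through the subgradient sign $\sigma_i^{(t)}:=\operatorname{sign}(s_i-x_i^{(t)})\in\{-1,+1\}$, never through its value: at step $t$ node $i$ performs exactly the update it would perform with the affine local cost $x\mapsto-\sigma_i^{(t)}x$ in place of $|x-s_i|$. The plan is to show $\mathcal O$ is then a deterministic function of $\big(\{S_j\}_{j\in\mathcal V\setminus\{i\}},\{Z_{j|k}^{(0)}\}_{(j,k)\in\mathcal E}\big)$ only, hence independent of $S_i$, giving $I(S_i;\mathcal O)=0$. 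I would prove this by induction on $t$: for $t=0$, $T_i^{(0)}=\sum_{j\in\mathcal N_i}A_{ij}z_{i|j}^{(0)}$ and thus $\ell_i^{(0)},u_i^{(0)}$ depend only on the random initialisation, so given $\sigma_i^{(0)}$ the value $x_i^{(0)}$ is fixed without reference to the magnitude of $s_i$, the other $x_j^{(0)}$ are functions of $\{s_j\}_{j\ne i}$ and the initialisation via their own updates, and \eqref{eq.z_update_eq} produces every $z_{j|k}^{(1)}$ from $z^{(0)}$ and $\{x_j^{(0)}\}_{j\in\mathcal V}$; the inductive step repeats this, carrying ``no dependence on the value of $s_i$'' through \eqref{eq.x_update_eq} and \eqref{eq.z_update_eq} to iteration $t+1$. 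Since $\mathcal O$ is assembled from $\{S_j\}_{j\in\mathcal V_c}$, $\{X_j^{(t)}\}_{j\in\mathcal V,t\in\mathcal T}$ and $\{Z_{j|k}^{(t)}\}_{(j,k)\in\mathcal E,t\in\mathcal T}$ — each shown measurable with respect to $\big(\{S_j\}_{j\ne i},\{Z_{j|k}^{(0)}\}\big)$ — independence of $S_i$, and hence $I(S_i;\mathcal O)=0$, follow.

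\textbf{The main obstacle.} The delicate point is the tacit assumption that the clamping direction $\sigma_i^{(t)}$ is itself independent of $s_i$: a priori two values of $s_i$ both satisfying \eqref{eq:cond} could sit on opposite sides of $[\ell_i^{(t)},u_i^{(t)}]$, reinstating an $s_i$-dependence in $x_i^{(t)}$ and therefore in $\mathcal O$. Closing this gap means showing that, whenever \eqref{eq:cond} holds for all $t$, the sign sequence $(\sigma_i^{(t)})_{t\in\mathcal T}$ is common to every admissible $s_i$ --- i.e.\ the feasible set for $s_i$ lies inside a single clamping cell of the coupled $x$/$z$ iteration --- which is exactly where the (large-variance) law of the random initialisation $z^{(0)}$ and its propagation through the $z$-recursion must be used. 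The remaining work is routine bookkeeping with \eqref{eq.x_update_eq}--\eqref{eq.z_update_eq} together with monotonicity of mutual information under passing to a subset of the observations.
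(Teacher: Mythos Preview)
Your necessity argument is sound, and in fact goes beyond what the paper provides: although the theorem is stated as an ``if and only if'', the paper's written proof addresses only the sufficiency direction.

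For sufficiency, your inductive reduction and the paper's argument are the same in spirit. The paper derives a difference recursion
\[
x_j^{(t+1)}-x_j^{(t)}=\frac{1}{d_j}\sum_{k\in\mathcal N_j}\Big(x_k^{(t)}-\tfrac12 x_k^{(t-1)}-\tfrac12 x_j^{(t-1)}\Big)+r_j^{(t)},\qquad r_j^{(t)}\in\Big\{0,\pm\tfrac{2}{cd_j}\Big\},
\]
and then runs a four-step chain of mutual-information equalities to conclude that $\mathcal O$ carries no more information about $S_i$ than $(\{S_j\}_{j\in\mathcal V_c},\{Z^{(0)}_{j|k}\}_{(j,k)\in\mathcal E})$ does. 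Both routes terminate at the same independence claim; yours is simply organised as an induction over $t$ rather than via closed-form differences.

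The obstacle you single out --- that the clamping side $\sigma_i^{(t)}$ may still depend on $s_i$ even when \eqref{eq:cond} holds --- is real, and the paper's proof does not close it either. Its penultimate step asserts that $\{x_j^{(0)},x_j^{(1)}\}_{j\in\mathcal V}$ is computable from the $z^{(0)}$'s alone, which for $j=i$ ignores exactly the sign dependence you isolate (and the branch indicator $r_j^{(t)}$ in the recursion above hides the same issue). So you are not missing an idea that the paper supplies; the paper simply glosses over the same point. One caution about your proposed fix: at $t=0$ the endpoints $\ell_i^{(0)},u_i^{(0)}$ depend only on $z^{(0)}$, so both $s_i<\ell_i^{(0)}$ and $s_i>u_i^{(0)}$ are compatible with \eqref{eq:cond} yet produce different $x_i^{(0)}$. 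Hence the admissible set for $s_i$ is in general \emph{not} contained in a single clamping cell, and the literal claim $I(S_i;\mathcal O)=0$ appears to require either a restricted reading of the statement or an additional hypothesis on the initialisation, rather than the ``routine bookkeeping'' you anticipate.
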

\begin{proof}
Replacing $z_{j\mid i}^{(t)}$ and $z_{i\mid j}^{(t)}$  in \eqref{eq.z_update_eq} we obtain 
\begin{align}  
z_{j\mid i}^{(t+1)}-z_{j\mid i}^{(t)}&=cA_{ij}x_i^{(t)}-\frac{1}{2}cA_{ij}x_i^{(t-1)}+\frac{1}{2}cA_{ji}x_j^{(t-1)}
\nonumber \\
&= cA_{ij} \left(x_i^{(t)} - \frac{1}{2}x_i^{(t-1)} - \frac{1}{2}x_j^{(t-1)}\right).
\label{eq:zdif} 
\end{align}
Assume condition \eqref{eq:cond} holds. Considering the difference $x_j^{(t+1)}-x_j^{(t)}$ using \eqref{eq.x_update_eq} and combining with \eqref{eq:zdif} we obtain
\begin{align} \label{eq:zxSuf}
    \nonumber x_j^{(t+1)}-x_j^{(t)} &=\frac{-\sum_{  k \in {\cal N}_j} A_{  jk} (z_{  j|k}^{(t+1)}-  z_{  j|k}^{(t)})}{cd_j}+r_j^{(t)}\\
    &=\frac{\sum_{  k \in {\cal N}_j}  (x_k^{(t)}-\frac{1}{2}x_k^{(t-1)}-\frac{1}{2}x_j^{(t-1)})}{d_j}+r_j^{(t)},
\end{align}
where $r_j^{(t)} \in \{0, \pm \frac{2}{c d_j}\}$ is a constant depending on the range of $x_j^{(t+1)}$ and $x_j^{(t)}$.
Hence,
    \begin{align*}
    I(S_i;\mathcal{O}) &=
    I(S_i;\{S_j\}_{j\in\mathcal{V}_c},\{X_{j}^{(t)}\}_{j\in \mathcal{V},t\in\mathcal{T}},\{Z_{j\mid k}^{(t)}\}_{(j,k)\in\mathcal{E},t\in\mathcal{T}})\\
    &\overset{(\rm a)}{=} I(S_i;\{S_j\}_{j\in\mathcal{V}_c},\{X_{j}^{(t)}\}_{j\in \mathcal{V},t\in\mathcal{T}},\{Z_{j\mid k}^{(0)}\}_{(j,k)\in\mathcal{E}})\\
    &\overset{(\rm b)}{=} I(S_i;\{S_j\}_{j\in\mathcal{V}_c},\{X_{j}^{(0)},X_{j}^{(1)}\}_{j\in \mathcal{V}},\{Z_{j\mid k}^{(0)}\}_{(j,k)\in\mathcal{E}})\\
    &\overset{(\rm c)}{=} I(S_i;\{S_j\}_{j\in\mathcal{V}_c},\{Z_{j\mid k}^{(0)}\}_{(j,k)\in\mathcal{E}})\\
    &\overset{(\rm d)}{=} 0,
\end{align*}
where (a) follows from \eqref{eq.z_update_eq}, (b) follows from \eqref{eq:zxSuf}, (c) holds since $\{x_{j}^{(0)},x_{j}^{(1)}\}_{j\in \mathcal{V}}$ can be computed from $z_{j|k}^{(0)}$ and $z_{k|j}^{(0)}$, while (d) holds since $S_j, j\neq i,$ and all $Z^{(0)}$s are independent of $S_i$.
\end{proof}


Two remarks are placed here: \newline
1) Theorem~\ref{theo:I} states that if node $i$ never transmits $s_i$ to its neighboring nodes, no information about $s_i$ will be disclosed. The fundamental principle of privacy preservation is thus to ensure that $s_i$ remains outside the interval
$
\left[ \frac{ -1- \sum_{j \in \mathcal{N}_i} A_{i j} z_{i \mid j}^{(t)}}{c d_i}, \frac{ 1- \sum_{j \in \mathcal{N}_i} A_{ij} z_{i \mid j}^{(t)}}{c d_i} \right]
$, which is of length $\frac{2}{c d_i}$. This length is solely determined by the degree of node $i$ and the convergence parameter $c$.  The impact of $d_i$ and $c$ on privacy preservation will be further discussed in Sections~\ref{ssec.di} and~\ref{ssec.c}, respectively. 

2)  The condition in equation~\eqref{eq:cond} is equivalent to requiring that $x_i^{(t)} \neq s_i$ for all $t \in \mathcal{T}$. Therefore, by monitoring whether the updated value $x_i^{(t)}$ ever equals the private value $s_i$, we can directly determine whether perfect privacy is maintained. 


\section{Simulation Results}
In this section we present numerical results to validate our theoretical claims and discuss the factors that influence privacy leakage.

\noindent\textbf{Initialization:} 
We generate a random geometric graph (RGG) \cite{penrose2003random} with $n=5$ nodes. The private data are randomly drawn from a standard normal distribution $\mathcal{N}(0,1)$. For the purpose of comparison, we fixed the values of $s_i$ throughout the experiments. However, it is important to emphasize that $s_i$ can, in fact, be randomized arbitrarily without affecting the validity of the analysis or the results. The auxiliary variables $z_{i\mid j}^{(0)}$ are sampled from a normal distribution $\mathcal{N}(\mu A_{ij},\sigma^2)$. The choice of $\mu$ plays a critical role in determining the initialization of $x$.

\subsection{Feasibility of derived conditions}
We investigate whether the derived condition in Theorem \ref{thm.med_l1} is satisfied in real-world scenarios.
We first set $\mu=0$ and $\sigma^2=10^{-2}$, ensuring that each node starts its iterations from a point near the median value. Figure \ref{fig:eq2} illustrates the convergence of the entire network (left plot) and individual nodes (right plot). Note that the actual private data values $s_i$ are indicated in the legend.
As shown in Figure \ref{fig:eq2}(b), the optimization variables of all nodes, except for the one holding the median value, satisfy $x_i^{(t)} \neq s_i$ for all $t \in \mathcal{T}$. Consequently, the condition in equation~\eqref{eq:cond} is automatically satisfied for these nodes, implying that perfect privacy is maintained throughout the optimization process.  


\begin{figure}[ht]
    \centering
\includegraphics[width=0.54\textwidth]{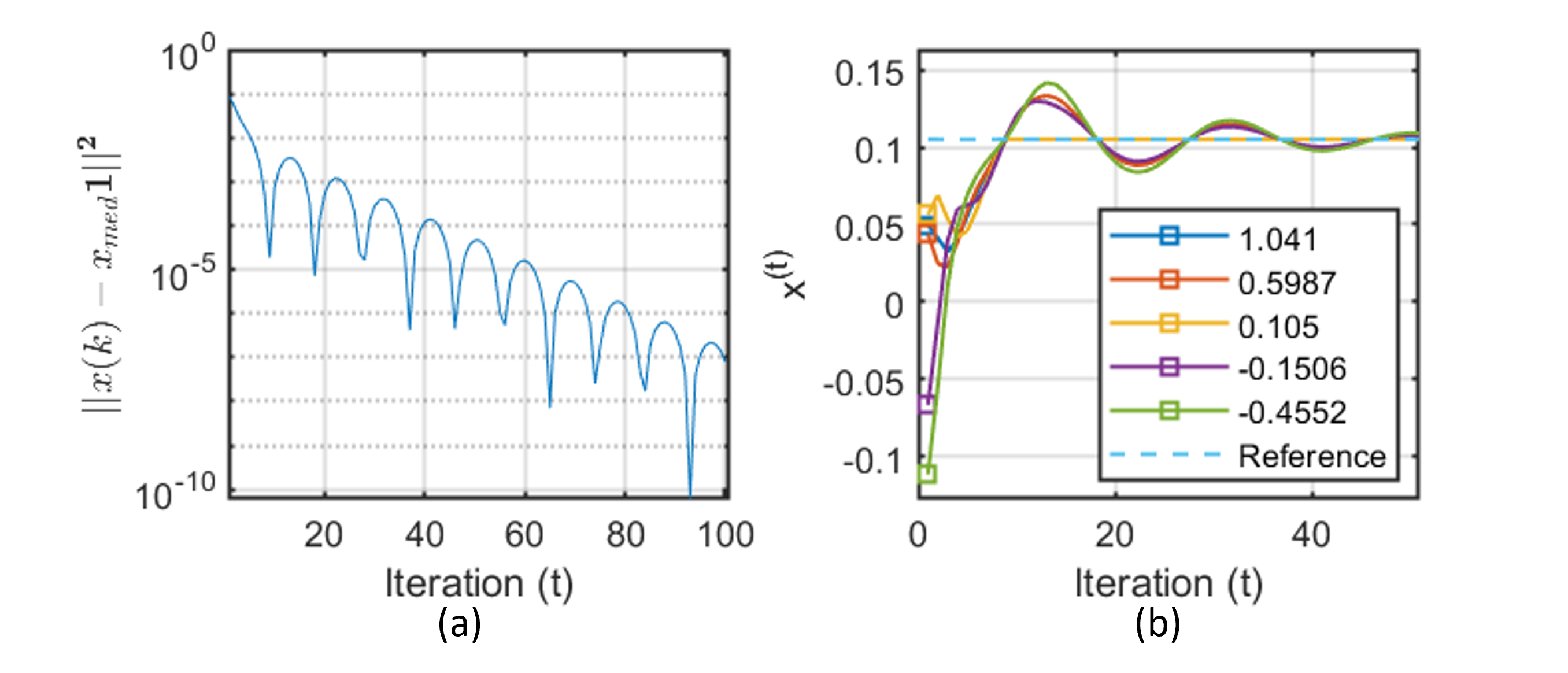}
    \caption{(a) Overall convergence error of the whole network; (b) Convergence of the optimization variable $x^{(t)}$ for each node,  as a function of iteration ($t$).}
    \label{fig:eq2}
\end{figure}


\subsection{The case when the conditions are not satisfied}
Although we demonstrated above that the derived conditions can be satisfied, there are scenarios in which they are not. For example, it is not always feasible to identify an initial point that lies sufficiently close to the median value without prior knowledge. 
Figure~\ref{fig:dp}(a) illustrates a scenario where $\mu = -10$ and $\sigma^2 = 1$. Under these conditions, the initialization of $x$ tends to be significantly larger than the actual data distribution. As shown, nodes holding values above the median (represented by the blue and red lines) may encounter a plateau during the process, where $s_i\in[\frac{ -1- \sum_{  j \in {\cal N}_i} A_{  i j} z_{  i|j}^{(t)}}{cd_i},\frac{ 1- \sum_{  j \in {\cal N}_i} A_{  i j} z_{  i|j}^{(t)}}{cd_i}]$. Consequently, privacy guarantees are only achievable for half of the nodes whose values lie below the median. In this case, only half of the nodes satisfy the privacy condition. To further enhance the privacy of nodes lacking guarantees, we can apply differential privacy techniques~ \cite{kefayati2007secure,wang2018differentially}. Adding differential private noise to our approach will result in the following information leakage:  
\begin{align*}
    I(S_i; \mathcal{O}) &= I(S_i; S_i + N_i) < I(S_i; S_i),
\end{align*}
where the random variable $N$ represents the introduced random noise. In Figure~\ref{fig:dp}(b), we illustrate the same scenario as in Figure~\ref{fig:eq2} and \ref{fig:dp}(b), but with the inclusion of a random offset drawn from a normal distribution \( \mathcal{N}(0, 10^{-2}) \). It is evident that the plateau still exists, however, noise offsets are introduced in the private value, thereby providing a certain level of privacy protection. This improvement in privacy comes at a cost: a reduction in accuracy, illustrating the inherent trade-off between privacy and precision.

\begin{figure}[ht]
    \centering
\includegraphics[width=0.4\textwidth]{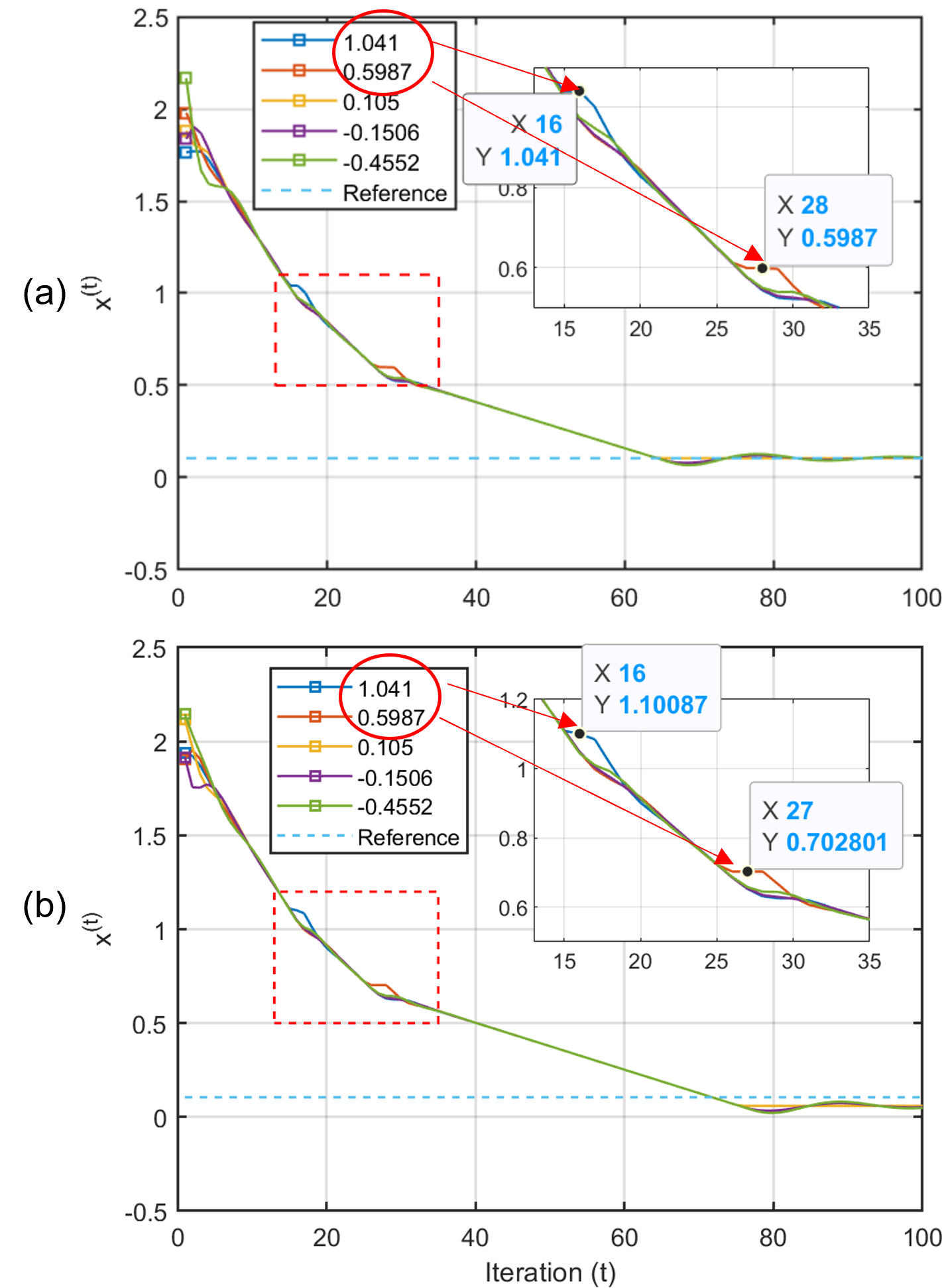}
    \caption{Information leakage with specific initialization: (a) without differential privacy and (b) with differential privacy.}
    \label{fig:dp}
\end{figure}

\subsection{The impact of topology density}\label{ssec.di}
As discussed in Section \ref{sec.pa}, the number of neighbors (degree $d_i$) influences the length of the decision interval, which affects individual privacy. To examine this effect, we generate three types of graphs with $n=15$ nodes: a ring graph (where each node has degree $d_i=2$), a complete graph (where each node has degree $d_i=|\mathcal{V}|-1$), and a random geometric graph (RGG). For each graph, we run 100 simulations and compute the average proportion of nodes that preserve their privacy. The results are presented in Figure~\ref{fig:d}. Our analysis shows that a denser topology, with a higher degree $d$, reduces the window width, allowing more nodes to securely preserve their values.
\begin{figure}[ht]
    \centering
\includegraphics[width=0.35\textwidth]{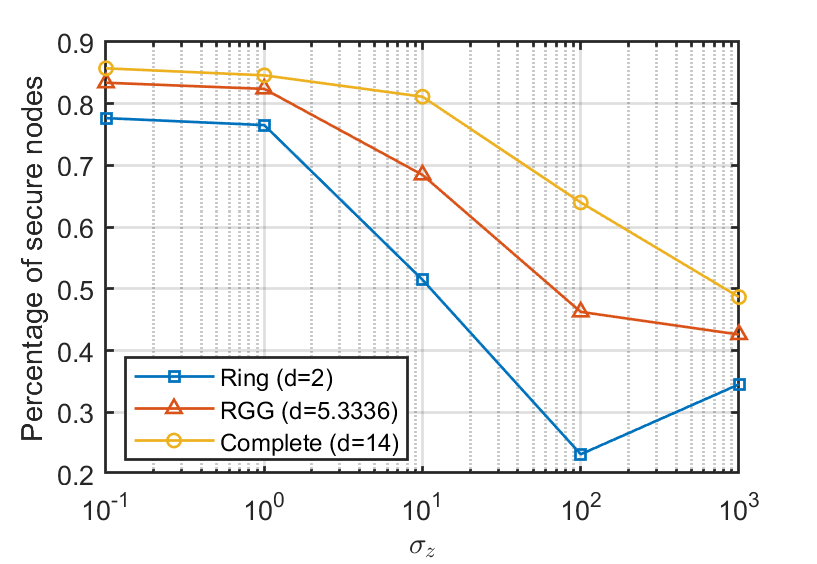}
    \caption{Proportion of secure nodes under different topologies (ring graph, random geometric graph and complete graph)}
    \label{fig:d}
\end{figure}

\subsection{The impact of convergence parameter $c$}\label{ssec.c}
In addition to the degree, the convergence parameter $c$ also plays a critical role in determining the convergence rate and the length of the decision interval. As illustrated in our analysis, the impact of this parameter on privacy mirrors the influence of node degree in Figure~\ref{fig:c}: as expected, a larger value of $c$ results in a smaller window, thus enhancing the ability of more nodes to securely preserve their values. This relationship highlights the need to carefully choose the convergence parameter to balance efficiency and privacy.
\begin{figure}[ht]
    \centering
\includegraphics[width=0.35\textwidth]{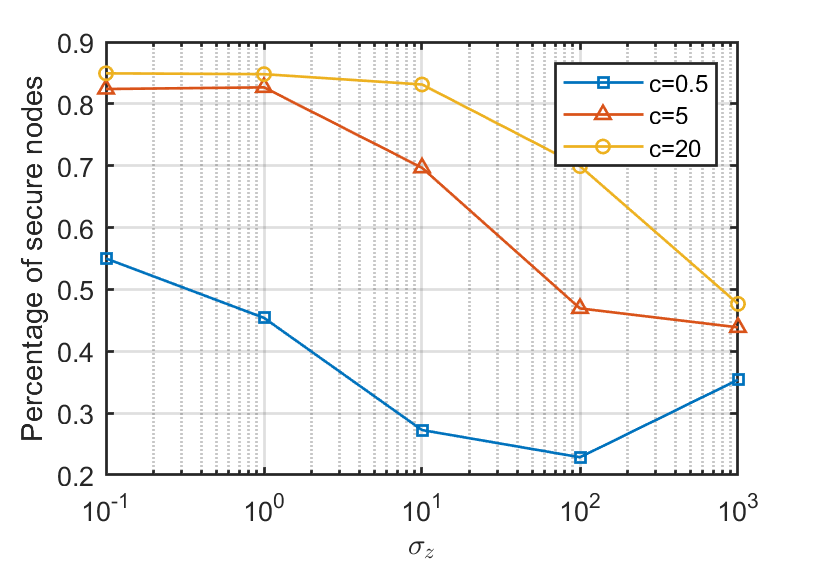}
    \caption{Proportion of secure nodes with different choices of $c$}
    \label{fig:c}
\end{figure}

\section{Conclusion}
In this study, we established sufficient and necessary conditions for achieving privacy-preserving median consensus under the ADMM/PDMM framework. Our results demonstrate that it is possible to achieve both perfect privacy and optimal accuracy simultaneously. In addition, we analyzed the influence of network density and convergence parameters on the robustness of privacy guarantees. Finally, we showed that incorporating differential privacy can provide additional protection when the derived conditions are not entirely met, albeit at the cost of reduced accuracy.



\bibliographystyle{IEEEbib}
\bibliography{refs}

\end{document}